\DeclareMathOperator{\tr}{tr}
\DeclareMathOperator{\wt}{wt}
\newcommand{\bra}[1]{\mathinner{\langle #1|}}
\newcommand{\ket}[1]{\mathinner{|#1\rangle}}
\newcommand{\dyad}[1]{| #1\rangle \langle #1|}
\newcommand{\ot}[0]{\otimes}
\newcommand{\one}[0]{\mathds{1}}
\renewcommand{\a}{\alpha}
\renewcommand{\b}{\beta}
\newcommand{\R}{\mathds{R}}
\newcommand{\C}{\mathds{C}}
\newcommand{\PP}{\mathcal{P}}
\newtheorem{theorem}{Theorem}
\newtheorem*{theorem*}{Theorem}
\newtheorem{observation}[theorem]{Observation}
\newtheorem{remark}[theorem]{Remark}
\newcommand{\nn}{\nonumber}
\begin{document}

\title[
On two maximally entangled couples
]{On two maximally entangled couples}

\date{\today}

\date{\today}

\author{Felix Huber$^{1}$}
\address{$^{1}$~Division of Quantum Computing,
Institute of Informatics,
Faculty of Mathematics, Physics and Informatics,
University of Gdańsk,
Wita Stwosza 57, 80-308 Gdańsk, Poland
}
\email{felix.huber@ug.edu.pl}

\author{Jens Siewert$^{2,3}$}
\address{$^{2}$~University of the Basque Country UPV/EHU,
Department of Physical Chemistry,
P.O.Box 644, 48080 Bilbao, Spain
}
\address{$^{3}$~Ikerbasque Foundation, Basque Foundation for Science, 48000 Bilbao, Spain
}
\email{jens.siewert@ehu.eus}

\thanks{
F.H.'s research was funded in whole or in part by the National Science Centre, Poland 2024/54/E/ST2/00451
and by the Polish National Agency for Academic Exchange under the Strategic Partnership Programme grant BNI/PST/2023/1/00013/U/00001.
For the purpose of Open Access,
the author has applied a CC-BY public copyright licence to any Author Accepted Manuscript (AAM) version arising from this submission.
J.S. was supported by Grant PID2021-126273NB-I00 funded by
MCIN/AEI/10.13039/501100011033 and by ”ERDF A
way of making Europe” as well as by the Basque Government through Grant No. IT1470-22.
The corresponding author is FH}

\begin{abstract}
 In a seminal article, Higuchi and Sudbery showed that a pure four-qubit state can not be maximally entangled across every bipartition.
 Such states are now known as absolutely maximally entangled (AME) states.
 Here we give a series of old and new proofs of the fact that no four-qubit AME state exists.
 These are based on invariant theory, methods from coding theory, and basic properties from linear algebra such as the Pauli commutation relations.
\end{abstract}

\maketitle

\section*{Introduction}
This article is dedicated to Ryszard Horodecki
on the occasion of his  80th birthday.
Over the past several decades, his pioneering contributions to the foundations
of quantum information theory--particularly in the areas
of entanglement
and the study of nonlocality, as exemplified in Refs.~\cite{Horodecki1995,HORODECKI19961,PhysRevLett.80.5239}--have profoundly shaped the field.
Together with his sons Paweł, Michał, and Karol Horodecki, his work has laid essential conceptual
and mathematical  groundwork that continues to inform and
inspire researchers across a wide spectrum of quantum information science.
Notably, his insights into
bound entanglement in highly mixed states—entangled yet undistillable—have revealed both the richness and the inherent limitations of quantum correlations.
In this spirit,
we turn our attention to a particular type of multipartite states: the so-called
absolutely maximally entangled states.
While distinct from Ryszard Horodecki's main areas
of inquiry, this work is driven by the same fundamental questions about the structure and implications of quantum correlations that continue to animate his contributions to the field.

\section*{Absolutely maximally entangeld states}

The existence of absolutely maximally entangled (AME) states, 
and in particular that of four-qubit AME states, 
is one of the simplest instances of the quantum marginal problem~\cite{1742-6596-36-1-014} and beautifully illustrates a
key difficulty at the core of all of many-body physics:
It asks whether there exists a pure $n$-qubit state (more generally, $n$-qudit state), 
such that every bipartition is maximally entangled;
or equivalently, for a pure $n$-qubit state such that all reduced density matrices~\footnote{Here we use the words 
{\em reduced density matrices} and {\em marginals}  interchangeably. 
A $k$-body reduced density matrix or $k$-body marginal is obtained after partially tracing out all but $k$ subsystems.}
of size $\lfloor \tfrac{n}{2} \rfloor$ are maximally mixed.
Naturally, in the case of two and three qubits, the Bell and Greenberger-Horne-Zeilinger states satisfy this requirement.
But what about four qubits?

In a seminal article (revisited in~\nameref{sec:HS}), Higuchi and Sudbery found that no such state can possibly exist. This opened up a field of research that dealt with both bounds and constructions on absolutely maximally entangled and $k$-uniform states,
the latter having all $k$-body marginals maximally mixed~\cite{PhysRevA.92.032316, PhysRevLett.118.200502, Karol36officers}. We point to the review~\cite{rajchelmieldzioc2025absolutelymaximallyentangledpure} for a comprehensive background on absolutely maximally entangled states, in particular regarding constructions. 

Let us sketch connections of this problem to quantum coding theory: if every vector in a subspace has maximally mixed $k$-body marginals, we are dealing with a pure quantum code of distance $k+1$~\cite{Huber_2020}.
Indeed it was early recognized by Scott that methods of coding theory are useful in the characterization of quantum entanglement~\cite{PhysRevA.69.052330}.
We present here two proofs inspired by coding theory.
\nameref{sec:shadow} relies on Rains' shadow inequalities~\cite{796376}, a monogamy relation constraining the purities in multipartite quantum states.
\nameref{sec:lovasz} rests on the Lov\'asz theta number,
a quantity traditionally used to bound the independence number and thus the capacity of a graph, but which has recently shown to be also applicable to quantum codes~\cite{munne2024sdpboundsquantumcodes}. 
This is a quite unexpected appearance of an invariant from graph theory, 
which here is used for a problem of continuous character, 
i.e. for the non-existence of a rank-$1$ subspace with some given properties.

A second influential connection is that to invariant theory: clearly the property of being maximally entangled does not depend on the local basis chosen, and thus it should be possible to characterize such states with local unitary polynomial invariants. \nameref{sec:GW}, as well as the newly presented proof using the $L+M+N= 0$ identity in ~\nameref{sec:LMN}
follow this line of thinking.
Perhaps Rains' article on polynomial invariants of quantum codes is the most complete combination of these two approaches~\cite{817508}; see also the subsequent book by Nebe, Rains, and Sloane~\cite{NebeRainsSloane2001}.

Finally, we present two proofs that are based on more pedestrian approaches: these proofs chiefly use the Schmidt decomposition and the commutation relations among Pauli matrices. The even-odd correlation approach of \nameref{sec:even_odd} was first used in Ref.~\cite{PhysRevLett.118.200502} to show the non-existence of a seven-qubit state, but surprisingly the same argument applies to all n-qubit states excluding $n=2,3,5,6$.
\nameref{sec:opposite} relies on the fact that all three-qubit GHZ-like states are local unitary invariant, and that the the correlations in a convex combination of two of such states must sufficiently cancel to allow for maximally mixed two-body marginals. 

These different approaches highlight that there often exists more structure in these problems 
than strictly necessary for a proof.
Furthermore, we see that entanglement theory, 
and in particular absolutely maximally entangled states,
manages to tie together very different subfield of mathematics.
While this collection of proofs shows that a lot is known, it is still difficult to pinpoint the essence of this non-existence fact. Although we are perfectly able to outline the problem and can elucidate its technical solution by connecting it to many different facts and mathematical properties of quantum many-body systems, we lack a simple intuition for how local features of the parts relate to (and constrain) the global structure of the system as a whole.

We now provide a collection of proofs for the following statement:
\begin{observation}
 Pure four-qubit states that are maximally entangled across every bipartition do not exist.
\end{observation}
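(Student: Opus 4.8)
The plan is to recast the problem as a statement about subsystem purities and then to derive a contradiction from Rains' shadow inequality~\cite{796376}. Recall that a pure four-qubit state $\ket{\psi}$ is maximally entangled across a bipartition $S\,|\,\bar S$ precisely when the reduced state $\rho_S$ is maximally mixed. Being AME therefore means that \emph{every} one- and two-body marginal is maximally mixed; equivalently, writing $\rho=\dyad{\psi}$ in the Pauli basis as $\rho=\tfrac{1}{16}\sum_P \tr(\rho P)\,P$, all Pauli correlations of weight $\wt(P)\in\{1,2\}$ vanish. I would not manipulate these correlations directly, but only record what the AME hypothesis forces upon the purities $\tr(\rho_S^2)$.

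First I would tabulate all subsystem purities. For $|S|\le 2$ maximal mixedness gives $\tr(\rho_S^2)=2^{-|S|}$, so $\tr(\rho_\emptyset^2)=1$, $\tr(\rho_S^2)=\tfrac12$ for $|S|=1$, and $\tr(\rho_S^2)=\tfrac14$ for $|S|=2$. For $|S|=3$ I would use that $\ket{\psi}$ is globally pure, so $\rho_S$ and $\rho_{\bar S}$ share the same nonzero spectrum (Schmidt decomposition); since $\bar S$ is a single qubit this gives $\tr(\rho_S^2)=\tr(\rho_{\bar S}^2)=\tfrac12$. Finally $\tr(\rho_{[4]}^2)=1$. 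Thus every purity is pinned down by the hypothesis.

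Next I would feed these numbers into the shadow inequality, which asserts that for every $T\subseteq[4]$ one has $\sum_{S\subseteq[4]}(-1)^{|S\cap T|}\tr(\rho_S^2)\ge 0$. Taking $T=[4]$ turns the sign into $(-1)^{|S|}$, and grouping subsets by size yields
\[
1-\binom{4}{1}\tfrac12+\binom{4}{2}\tfrac14-\binom{4}{3}\tfrac12+\binom{4}{4}\cdot 1
=1-2+\tfrac32-2+1=-\tfrac12<0 ,
\]
contradicting the inequality. Hence no four-qubit AME state can exist.

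The real content of the argument is the shadow inequality itself; once granted, the proof is a one-line count, so this is where I expect the main obstacle to lie. To keep things self-contained I would justify it on two copies of the system: letting $\mathbb{F}_S$ be the swap acting on the copies of the qubits in $S$, one has $\tr[(\rho\ot\rho)\mathbb{F}_S]=\tr(\rho_S^2)$, and the weighted operator factorizes as $\sum_S(-1)^{|S\cap T|}\mathbb{F}_S=\bigotimes_{i=1}^4\bigl(\one\pm\mathbb{F}^{(i)}\bigr)$, with sign $-$ exactly for $i\in T$. Since a single-qubit swap has eigenvalues $\pm1$, each factor $\one\pm\mathbb{F}^{(i)}$ is positive semidefinite, hence so is the whole operator, and pairing it with $\rho\ot\rho\ge 0$ gives the claim. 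By contrast, a purely elementary route through the surviving correlations—whose total strengths the purities already fix to $\sum_{\wt(P)=3}\tr(\rho P)^2=12$ and $\sum_{\wt(P)=4}\tr(\rho P)^2=3$ via $\tr(\rho^2)=1$—would instead lead to an unwieldy system of quadratic constraints coming from $\rho^2=\rho$, which the shadow operator neatly compresses into a single sign pattern.
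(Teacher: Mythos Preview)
Your argument is correct and coincides with the paper's Proof~5 (Shadow inequalities): you compute the same purities, take $T=[4]$, and obtain $S_{[4]}=-\tfrac12<0$. Your self-contained justification of the shadow inequality via the swap factorization is exactly the mechanism the paper alludes to when it says the non-negativity follows from projecting $\rho^{\otimes 2}$ onto local symmetric/antisymmetric subspaces, so there is no genuine difference in approach.
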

In the remainder of the article we review seven proofs for this fact. 
To our knowledge, \nameref{sec:opposite} and \nameref{sec:LMN} are presented here for the first time.

\section*{Proof 1: Higuchi-Sudbery}
\label{sec:HS}
In their remarkable work, Higuchi and Sudbery~\cite{HiguchiSudbery2000} analyzed the possibility of maximum entanglement among four qubits.
While entanglement for two and three qubits was well understood at the time, for $n>3$ qubits not much was known. They recognized that the obvious $n$-qubit entangled Greenberger-Horne-Zeilinger (GHZ) state is probably not the ``most entangled'', where that concept  had yet to be specified.
They did the latter by requiring the maximum number of maximally mixed reductions -- from then on this has been the common definition for {\em absolute maximal entanglement} (AME)~\cite{helwig2013absolutelymaximallyentangledstates}.
They noticed that it is not obvious that such a state would exist: The Bell state ($n=2$) and the GHZ state ($n=3$) correspond to the definition. For four qubits there at least exists the state
\[
\ket{\Psi}\ =\ \frac{1}{2}
             \left( \ket{0000}+\ket{0111}+\ket{1001}+
                    \ket{1110}
                    \right)\ \ ,
\]
which has four (out of six possible) maximally mixed two-qubit reductions. Hence, Higuchi and Sudbery  studied the question whether there can exist a four-qubit state
with six maximally mixed reduced states of two qubits. In the following we briefly sketch how they found their negative answer.

Assuming a four-qubit AME state $\ket{\psi}$ exists,
we can expand it in the computational basis as
\begin{align}
    \ket{\psi}=\sum a_{jklm}\ket{jklm} \ \ .
\end{align}
The conditions that all two-qubit reductions be fully mixed translate to the fact that
\begin{align}
    a_{jklm}\ =\ \frac{1}{2}\left(U_1\right)_{jk,lm}
            \ =\ \frac{1}{2}\left(U_2\right)_{jl,km}
            \ =\ \frac{1}{2}\left(U_3\right)_{jm,kl}
            \ \ ,
\end{align}
where $U_1$, $U_2$ and $U_3$ are 
two-qubit unitary matrices.
By applying local unitaries  to
$\ket{\psi}$ it can be achieved that
\begin{align}
    a_{1000}\ =\ a_{0100}\ =\ 0\ =\ a_{0010}\ =\ a_{0001}\ \ .
\label{eq:rows-and-columns}
\end{align}
To see that this is possible consider only the first equality, $a_{1000}=a_{0100}=0$. The four components $a_{mn00}$, $m,n\in\{0,1\}$, may formally be viewed as a two-qubit vector. As is well known the Schmidt form of this vector can be obtained by applying local unitaries to the qubits, corresponding here to unitary operations on the first and second qubit of $\ket{\psi}$, which yield the first equality. Here we need to make sure that $|a_{0000}|$ {\em increases}, that is, we choose the local unitaries such that the new $a_{0000}$ equals the larger Schmidt coefficient. An analogous argument holds for the second part of the equality treated separately (by considering 
$a_{00pq}$) and is achieved by applying local unitaries to the third and fourth qubit of $\ket{\psi}$. Again we choose the unitaries such that $|a_{0000}|$ increases. By iterating this procedure, one achieves the maximal $a_{0000}$,
at which point both parts of the equality hold simultaneously.

Subsequently Higuchi and Sudbery argue by using the orthogonality
conditions of the columns and rows of the unitary matrices $U_j$ ($j=1,2,3$) that more coefficients of $\ket{\psi}$ must be zero. With the remaining non-zero elements it is impossible to simultaneously guarantee unitarity of
$U_1$, $U_2$ and $U_3$, which concludes their proof.

\section*{Proof 2: Even and odd correlations}
\label{sec:even_odd}

The following strategy was used in Ref.~\cite{PhysRevLett.118.200502}
to disprove the existence of a seven-qubit AME state,
but it also works - in simpler form - for the non-existence of a four-qubit AME.

The proof follows the strategy of:
1) recognizing that all marginals of an AME state are proportional to projectors, and thus satisfy a quadratic equation;
2) expanding the density matrix into correlations of different weights;
and
3) splitting the quadratic equation by an even-odd parity rule on the weights of anti-commutators of Pauli matrices
leading to a contradiction.

\begin{proof}
Suppose there exists a a four-qubit AME state $\ket{\phi}$.
Let $\varrho_{123} = \tr_4(\dyad{\phi})$ be its three-body reduced density matrix on the first three systems.
Since the state $\ket{\phi}$ is AME,
all of its two-body reduced states are maximally mixed,
$\tr_1(\varrho_{123}) = \tr_2(\varrho_{123})= \tr_3(\varrho_{123}) = (\one \ot \one) / 4$.
Therefore $\varrho_{123}$ can be written as
\begin{equation}\label{eq:3_RDM_correl}
 \varrho_{123} = \frac{1}{8}(\one + P_3)\,,
\end{equation}
where $P_3$ contains three-body terms only. That is, $P_3$ contains only tensor products
of three non-identity Pauli matrices
$\sigma_\a, \sigma_\b, \sigma_{\gamma}\neq \one$,
\begin{equation}
 P_3 = \sum_{\a,\,\b,\,\gamma \neq 0} \, c_{\a\b\gamma} \, \sigma_\a \ot \sigma_\b \ot \sigma_\gamma\,,
\end{equation}
so that $P_3$ vanishes under any partial trace applied to 
any single subsystem.

The single-body marginal $\varrho_4 = \tr_{123}(\dyad{\phi})$ is maximally mixed and has eigenvalues $(\tfrac{1}{2}, \tfrac{1}{2})$.
By the Schmidt decomposition of pure states,
it can be seen that the complementary reduction $\varrho_{123}$ has the same non-zero spectrum.
Consequentially, $\varrho_{123}$ is proportional to a projector and satisfies,
\begin{equation}\label{eq:3_RDM_proj}
 \varrho_{123}^2 = \frac{1}{2}\varrho_{123}\,.
\end{equation}
Expanded in terms of the correlations as in Eq.~\eqref{eq:3_RDM_correl}, we decompose this projector relation as
\begin{equation}\label{eq:3_RDM_expand}
 \big( \frac{1}{8}(\one + P_3) \big)^2 = \frac{1}{2}\cdot\frac{1}{8}(\one + P_3)\,.
\end{equation}
Simplifying, one obtains
\begin{equation}\label{eq:3_RDM_expand_simpli}
 P_3^2 = 3\one + 2P_3\,.
\end{equation}

We now recall the even-odd Lemma of Ref.~\cite{PhysRevLett.118.200502}.
For this, denote the weight of a Pauli operator as even (odd),
if it contains a non-identity Pauli matrix in an even (odd) number of positions.
Then in any anti-commutator of Pauli operators, the weights of correlations satisfy
\begin{align}
\{\text{even}, \text{even}\} &\rightarrow \text{even}\,, \nn\\
\{\text{odd} , \text{odd}\} &\rightarrow \text{even}\,, \nn\\
\{\text{even}, \text{odd}\} &\rightarrow \text{odd}\,.
\end{align}
We use this to split the rhs and lhs of Eq.~\eqref{eq:3_RDM_expand_simpli} into its even and odd parts,
\begin{align}
 P_3^2 = \frac{1}{2} \{P_3, P_3\}&= 3 \one \quad\quad \text{(even)} \label{eq:split_even}\\
 0 &= P_3           \quad\quad \text{(odd)}  \label{eq:split_odd}
\end{align}
Now the second equation yields clearly a contradiction: necessarily $P_3 \neq 0$ due to the fact that $\varrho_{123}$ cannot have full rank, which follows from the Schmidt decomposition of a pure state.
As a consequence, a four-qubit AME state does not exist.
\end{proof}

\begin{remark}
The proof for the non-existence of the seven-qubit AME state requires only one additional argument:
the Schmidt decomposition shows that the joint state is an eigenvector of every marginal,
\begin{equation}
 \tr_S(\varrho) \ot \one_{S^c} \ket{\psi} \propto \ket{\psi}\,.
\end{equation}
where $S^c$ denotes the complement of $S$ in $\{1, \dots,n\}$.
The even-odd Lemma is then used to split two projector relations as in Eq.~\eqref{eq:3_RDM_proj}
each into two contributions, for which the eigenvector relations provide a contradiction.

\end{remark}

\section*{Proof 3: Opposite two-body correlations}
\label{sec:opposite}
We need a short lemma, 
originally proven in Ref.~\cite{SCHLIENZ199639},
which we now derive through an iterated Schmidt decomposition.

\begin{restatable}{lemma}{ghzlemma}\label{lem-1}
All pure three-qubit states with three maximally mixed 1-RDM are local unitary equivalent to the standard
GHZ state $(\ket{000}+ \ket{111})/\sqrt{2}$.
\end{restatable}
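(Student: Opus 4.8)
The plan is to first exhaust the local-unitary freedom by bringing the state into a canonical form, and then to translate the three maximal-mixedness conditions into simple constraints on the remaining parameters. Since local unitaries preserve the eigenvalues of every single-qubit marginal, the statement is invariant under local unitary equivalence, so it suffices to establish it for one representative of each orbit. The convenient representative is the generalized Schmidt (Ac\'in et al.\ / Carteret--Higuchi--Sudbery) form: every pure three-qubit state is local unitary equivalent to
\begin{equation}
 \ket{\psi} = \mu_0\ket{000} + \mu_1 e^{i\varphi}\ket{100} + \mu_2\ket{101} + \mu_3\ket{110} + \mu_4\ket{111}\,,
\end{equation}
with $\mu_i \geq 0$, $\sum_i \mu_i^2 = 1$ and $\varphi \in [0,\pi]$.

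I would then impose only the diagonal part of the maximal-mixedness conditions, namely $\langle \sigma_z \ot \one \ot \one\rangle = \langle \one \ot \sigma_z \ot \one \rangle = \langle \one \ot \one \ot \sigma_z \rangle = 0$, which express that each qubit is equally likely to be found in $\ket0$ and $\ket1$. Reading the populations off the canonical form, these become
\begin{align}
 \mu_0^2 &= \mu_1^2 + \mu_2^2 + \mu_3^2 + \mu_4^2\,, \nn\\
 \mu_0^2 + \mu_1^2 + \mu_2^2 &= \mu_3^2 + \mu_4^2\,, \nn\\
 \mu_0^2 + \mu_1^2 + \mu_3^2 &= \mu_2^2 + \mu_4^2\,.
\end{align}
The first line together with normalization gives $\mu_0^2 = 1/2$; feeding this into the second line forces $\mu_1 = \mu_2 = 0$, and then the third line forces $\mu_3 = 0$, leaving $\mu_4^2 = 1/2$. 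The canonical state has thus collapsed to $\tfrac{1}{\sqrt2}(\ket{000} + \ket{111})$, the standard GHZ state.

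The reason this short argument is conclusive is worth emphasising, and it is also where the only real subtlety lies: I use merely three of the nine marginal conditions, but the three diagonal ones already cut the five-parameter canonical family down to the single point GHZ. A state satisfying \emph{all} nine conditions in particular satisfies these three, and must therefore coincide with this representative; the remaining (off-diagonal) conditions need not be checked separately, being automatically satisfied by the GHZ state. The main obstacle is consequently not the computation, which is routine linear algebra, but the clean import of the generalized Schmidt decomposition. If one prefers a self-contained route, one can instead Schmidt-decompose across the $1|23$ cut to write $\ket{\psi} = \tfrac{1}{\sqrt2}(\ket0\ket{e_0} + \ket1\ket{e_1})$ with $\ket{e_0}\perp\ket{e_1}$, observe that $\varrho_{23} = \tfrac12(\dyad{e_0} + \dyad{e_1})$ is a rank-two two-qubit state with maximally mixed marginals, invoke the Bell-diagonal normal form to take $\ket{e_0}, \ket{e_1}$ to be two Bell states, and finally disentangle $\tfrac{1}{\sqrt2}(\ket0\ket{e_0} + \ket1\ket{e_1})$ into GHZ by a local Pauli followed by a Hadamard on the first qubit; the price is two imported facts rather than one.
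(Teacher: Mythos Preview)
Your argument is correct. The computations with the three $\langle Z_i\rangle=0$ conditions in the Ac\'in--Carteret--Higuchi--Sudbery canonical form are accurate, and your observation that only these three diagonal conditions already pin down the GHZ point is a nice economy.

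The paper takes a different route, closer in spirit to your ``self-contained'' alternative but without invoking the Bell-diagonal normal form. It Schmidt-decomposes across $A|BC$ (using $\varrho_A=\tfrac12\one$ to get equal Schmidt weights), then Schmidt-decomposes $\ket{u_0}$ across $B|C$, and parametrises the orthogonal $\ket{u_1}$ by hand. The constraints from $\varrho_B,\varrho_C$ being maximally mixed are then analysed case by case in the Schmidt coefficient $\lambda$, and explicit local unitaries are exhibited that map each case to the standard GHZ state. Your primary approach is considerably shorter because the generalized Schmidt decomposition absorbs all of that preparatory work; the price is importing that result. The paper's version is fully self-contained and also yields the transforming unitaries explicitly, which your argument does not (nor needs to, for the stated lemma). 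Your secondary route via the Bell-diagonal form is a third variant, trading one imported fact for another.
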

The proof can be found in the Appendix~\ref{app:GHZ_proof}.

We recall the definition of quantum error-correcting codes.
A code with parameters $(\!(n,K,\delta)\!)_2$ is able to encode a $K$-dimensional system into $n$ qubits,
such that all errors acting on at most $\lfloor \tfrac{\delta-1}{2}\rfloor$ subsystems can be corrected.
Then $n$ is block-length, $K$ the dimension, and $\delta$ the distance of the code.
More precisely, a subspace of $(\C^2)^{\ot n}$ is a $(\!(n,K,\delta)\!)_2$ code,
if and only if the Knill-Laflamme conditions hold
for all Pauli strings $E$ with weight $\text{wt}(E) < \delta$,
\begin{equation}\label{eq:KLF}
 \Pi E \Pi = c_{E} \Pi\,, \quad\quad\quad c_E \in \C\,,
\end{equation}
where $\Pi$ is the projector associated with the code subspace.
Here $\operatorname{wt}(E)$ is the number of subsystems a tensor-product operator $E$ acts non-trivially on.

A code is said to be {\em pure} 
if it additionally satisfies $c_E = \tr(E)/2^n$.
Thus for a pure code, $\tr_{S^c}(\Pi)$ is maximally mixed for all $S \subseteq \{1,\dots, n\}$ of size $|S| <\delta$ and where $S^c$ is the complement of $S$ in  $\{1,\dots, n\}$.
 This can be seen by writing
 $E = E_S \ot \one_{S^c}$ and taking the trace over Eq.~\eqref{eq:KLF},
\begin{equation}
\tr[\Pi E \Pi]
=  \tr[\Pi(E_S \ot \one_{S^c})]
=  \tr[\tr_{S^c}(\Pi) E_S]
= \tr(E) / 2^n \tr(\Pi)\,,
  \end{equation}
which vanishes for all $E\neq \one$.
Thus $\tr_{S^c}(\Pi)$ is maximally mixed for all $|S|<\delta$.

It is known that if there exists a pure code with parameters $(\!(n,K,d)\!)_D$, then there also exists a pure $(\!(n-1,DK,d-1)\!)_D$
code~\cite[Theorem 19]{681316}. This is known as {\em code propagation}, 
and the new code (sub-)space is obtained by taking a partial trace over a single subsystem of the old code space.
After the partial trace only $n-1$ subsystems remain and thus the new block-length is $n-1$, 
while it can be shown that this procedure also 
decreases the distance by one with the dimension of the code space increasing by a factor of the local dimension 
that has been traced out.

\smallskip
We now show the non-existence of four-qubit AME states.
\begin{proof}
Having all two-body marginals maximally mixed, a four-qubit AME state corresponds to a pure quantum error-correcting code with parameters $ (\!(4,1,3)\!)_2$\cite{681315,PhysRevA.69.052330}.
Propagating the $(\!(4,1,3)\!)_2$ by
a partial trace on the last party yields a pure code with parameters $ (\!(3,2,2)\!)_2$,
represented by the associated density matrix $\varrho_{ABC}$.
By the Schmidt decomposition, this reduced state
has eigenvalues $(\tfrac{1}{2}, \tfrac{1}{2})$ and thus fulfills the relation $\varrho_{ABC}^2 = \frac{1}{2} \varrho_{ABC}$.
We can write this state as a convex combination,
\begin{equation}
 \varrho_{ABC} = \frac{1}{2} \big(\dyad{v} + \dyad{w}\big),
\end{equation}
where $\ket{v}$ and $\ket{w}$ are orthogonal vectors spanning the code space.
The code is pure with distance two,
and thus
every vector in its span must be 1-uniform, including
the vectors $\ket{v}$ and $\ket{w}$.
We can thus expand them as
\begin{align}\label{eq:GHZ_ansatz}
 \dyad{v} &= \frac{1}{8}(\one + P_2 + P_3)\,, \nn\\
 \dyad{w} &= \frac{1}{8}(\one + Q_2 + Q_3)\,,
\end{align}
where $P_2, Q_2$ contain only Pauli operators of weight two,
and $P_3,Q_3$ contain only Pauli operators of weight three.
In this way, $\dyad{v}$ and $\dyad{w}$
have vanishing one-body correlations.

Lemma~\ref{lem-1}
shows that any pure state 1-uniform state
must be locally unitary equivalent to the three-qubit Greenberger-Horne-Zeilinger state.
We can thus restrict $\ket{v}$ and $\ket{w}$ to be of GHZ type. In particular,
choose a local basis so that $\ket{v} = \ket{\text{GHZ}}$.

Let us analyze the spectra of their correlations. The state
$\ket{\text{GHZ}} = \frac{1}{\sqrt{2}}( \ket{000} + \ket{111})$ decomposes in terms of Pauli operators as
\begin{equation}
    \dyad{\text{GHZ}} = \frac{1}{8} (III
    + ZZI + ZIZ + IZZ - XYY - YXY - YYX + ZZZ)\,.
\end{equation}
Now group the two and three-body correlations,
\begin{align}
    P_2 &= ZZI + ZIZ + IZZ\,, \nn\\
    P_3 &= - XYY - YXY - YYX + ZZZ \,.
\end{align}
It can be checked that the eigenvalues are
\begin{align}\label{eq:evals}
\sigma(P_2) &= (-1, -1, -1, -1, -1, -1,  3,  3)\,, \nn\\
 \sigma(P_3) &= (-4,  0,  0,  0,  0,  0,  0,  4)\,.
 \end{align}
Due to Lemma~\ref{lem-1}, all three-qubit GHZ-type  states are local unitary equivalent. As a consequence, also the eigenvalue structures of their corresponding two- and three-body correlations are equal.
That is, $\sigma(Q_2) = \sigma(P_2)$ and
$\sigma(Q_3) = \sigma(P_3)$.

Now note that because $\varrho_{ABC} = (\dyad{v} + \dyad{w})/2$ is two-uniform (being the reduction of a two-uniform state), we must necessarily have that $Q_2 = - P_2$ in Eq.~\eqref{eq:GHZ_ansatz}.
Thus for $ \dyad{w}$, we need to make the Ansatz
\begin{equation}
    \dyad{w} = \frac{1}{8}(\one - P_2 + Q_3)\,,
\end{equation}
in the hope that $Q_3$ renders the matrix
$\dyad{w}$ positive semidefinite.
However, the spectrum obtained from the two-body correlations $P_2$ of $\ket{v}$ alone already has negative eigenvalues,
\begin{equation}
 \sigma(\one - P_2) = (-2, -2,  2,  2,  2,  2,  2,  2) \,.
\end{equation}

By Eq.~\eqref{eq:evals}, $Q_3$ has the non-zero eigenvalues $\{4,-4\}$.
Now recall that that
 $\lambda_{\max}(A+B) \leq \lambda_{\max}(A) + \lambda_{\max}(B)$ for all hermitian matrices $A,B$.
Then $\ket{w}$ being pure requires that $\lambda_{\text{max}}(Q_3) \geq 6$ so that
$\lambda_{\text{max}}(\one - P_2 + Q_3) = 8$.
This leads to a contradiction, as by Eq.~\eqref{eq:evals} one has $\lambda_\text{max}(Q_3) = 4$. This ends the proof.
\end{proof}

\section*{Proof 4: Polynomial Invariants of Four Qubits}
\label{sec:LMN}

Polynomial invariants are polynomial expressions of the coefficients of a pure multi-party quantum state (with local dimension $d$), which remain invariant under the application of local SL($d$,$C$) operations. These operations are represented by complex $d\times d$ matrices with determinant equal to 1. 
Polynomial invariants are relevant for the quantification of entanglement because they allow for a straightforward construction of measures for different classes of genuine multipartite entanglement~\cite{Verstraete2003,EltschkaSiewertReview2014}. 

For states of four qubits it is possible to construct the full generating set of invariants under the group SL$(2,C)^{\otimes 4}$ as was demonstrated by Luque and Thibon~\cite{LT2003}. 
It consists of four elements, one degree-2 invariant, two degree-4 invariants and one degree-6 invariant.
By writing a four-qubit state in the computational basis,
$\ket{\psi}=\sum a_{j_1 j_2 j_3 j_4}\ket{j_1 j_2 j_3 j_4}$
(where $j_k\in\{0,1\}$) one obtains three degree-4 invariants~\cite{LT2003}
\begin{subequations}
\begin{align}
          L\ \equiv\ \det{\mathsf{L}}\ =\ \left|
                \begin{array}{cccc}
                a_{0000} & a_{0100} & a_{1000} & a_{1100}\\
                a_{0001} & a_{0101} & a_{1001} & a_{1101}\\
                a_{0010} & a_{0110} & a_{1010} & a_{1110}\\
                a_{0011} & a_{0111} & a_{1011} & a_{1111}
                \end{array}
                \right|\ \ ,
\label{eq:LTL}
\\
          M\ \equiv\ \det{\mathsf{M}}\ =\ \left|
                \begin{array}{cccc}
                a_{0000} & a_{1000} & a_{0010} & a_{1010}\\
                a_{0001} & a_{1001} & a_{0011} & a_{1011}\\
                a_{0100} & a_{1100} & a_{0110} & a_{1110}\\
                a_{0101} & a_{1101} & a_{0111} & a_{1111}
                \end{array}
                \right|\ \ ,
\label{eq:LTM}
\\
          N\ \equiv\ \det{\mathsf{N}}\ =\ \left|
                \begin{array}{cccc}
                a_{0000} & a_{0001} & a_{1000} & a_{1001}\\
                a_{0010} & a_{0011} & a_{1010} & a_{1011}\\
                a_{0100} & a_{0101} & a_{1100} & a_{1101}\\
                a_{0110} & a_{0111} & a_{1110} & a_{1111}
                \end{array}
                \right|\ \ ,
\label{eq:LTN}
\end{align}
\end{subequations}
where any two of them are linearly independent.
It can be seen by inspection (i.e. evaluating the determinants) that they fulfill the relation
\begin{align}
    L\ +\ M\ +\ N\ =\ 0\ \ .
\label{eq:sumLMN}
\end{align}
The properties of these invariants give rise to yet another approach to prove the non-existence of the four-qubit AME state.

\begin{proof}
In Ref.~\cite{Eltschka2012}  it was observed that the matrices $\mathsf{L}$, $\mathsf{M}$ and $\mathsf{N}$ are closely related to the two-qubit reduced states of $\ket{\psi}$, namely,
\begin{subequations}
\begin{align}
     \rho_{12}\ =\ & \mathsf{L}^T \mathsf{L}^*\ \ , \label{eq:L_a}
     \\
     \rho_{24}\ =\ & \mathsf{M}\mathsf{M}^{\dagger}\ \ ,
     \\
     \rho_{14}\ =\ & \mathsf{N}^T\mathsf{N}^*\ \ . \label{eq:L_c}
\end{align}
\end{subequations}
For an AME state we must have
$\rho_{12}=\rho_{24}=\rho_{14}=\frac{1}{4} \one_4$. 
After taking a complex conjugate for 
Eqs.~\eqref{eq:L_a} and \eqref{eq:L_c}, 
these equalities imply that the matrices $\mathsf{L}$, $\mathsf{M}$, and $\mathsf{N}$, up to a factor $\frac{1}{2}$, must be unitary. Accordingly, the columns and rows of each matrix form an orthogonal basis, with all vectors normalized to $\frac{1}{2}$.
Moreover, we can choose the global phase of $\ket{\psi}$ so that, because of Eq.~\eqref{eq:sumLMN},
\begin{align}
    L\ =\ \frac{1}{16}\ , \ \
    M,N\in
    \Bigg\{ \frac{1}{16}
            \mathrm{e}^{\frac{2\pi\mathrm{i}}{3}},
            \frac{1}{16}
            \mathrm{e}^{\frac{4\pi\mathrm{i}}{3}}
           \Bigg\}\ \ .
\label{eq:phasesLMN}
\end{align}
Consider now, e.g., the two-qubit column vectors of $\mathsf{L}$. We may apply local unitaries to the qubits of the first column (which amounts to applying local unitaries to the third and fourth qubit of $\ket{\psi}$) so as to obtain the Schmidt decomposition, i.e.,
\[
           a_{0001}\ =\ a_{0010}\ =\ 0\ \ .
\]
Similarly we can apply local unitaries to the first and second qubit of $\ket{\psi}$ and get the Schmidt form also for the first row vector of $\mathsf{L}$,
\[
           a_{0100}\ =\ a_{1000}\ =\ 0\ \ .
\]
Using an argument and procedure analogous to that of \nameref{sec:HS} in deriving Eq.~\eqref{eq:rows-and-columns}, we obtain that both equalities are satisfied simultaneously.
Assume for the moment that $a_{0011}\neq 0$ (we discuss the opposite case below). The orthogonality conditions then require the following structure of $\mathsf{L}$ (up to sign changes and column or row permutations),
\begin{align}
    \mathsf{L}\ =\ \left[
                \begin{array}{cccc}
                s & 0 & 0 & -t\\
                0 & u & -v^* & 0\\
                0 & v &  u^* & 0\\
                t & 0 & 0 & s
                \end{array}
                \right]\ \ ,
\end{align}
where $s,t\in \mathds{R}$ and $u,v\in \mathds{C}$. This fixes also the entries of $\mathsf{M}$ and $\mathsf{N}$.
We then find,
\begin{align}
         M\ =\ (s^2-|u|^2)(t^2-|v|^2)\ ,
\ \ \    N\ =\ -(t^2+|u|^2)(s^2+|v|^2)\ \ .
\label{eq:ameMN}
\end{align}
It is evident from Eq.~\eqref{eq:ameMN} that
$M,N\in \mathds{R}$, in contradiction with
Eq.~\eqref{eq:phasesLMN}.

If, instead, $a_{0011}=0$ it follows from the normalization for all rows and columns in $\mathsf{L}$, $\mathsf{M}$,
$\mathsf{N}$ that
\[
    a_{1100}=a_{0101}=a_{1010}=a_{1001}=a_{0110}=0\ \ .
\]
This, however, leads to $L=M=N=0$ which again is a contradiction to Eq.~\eqref{eq:phasesLMN}.
\end{proof}

We note that in its spirit, this proof version is rather similar to the original Higuchi-Sudbery idea.
The coefficients of $\ket{\psi}$ may be arranged into three different two-qubit unitaries, which are linked to one another by partial transposition or reshuffling of the matrix elements~\cite{Bengtsson_Zyczkowski_2017,Karol36officers}. By exploiting the properties of the coefficients of $\ket{\psi}$ under local unitaries it can be shown that not all of the three matrices can simultaneously
be unitary.

\section*{Proof 5: Shadow inequalities}
\label{sec:shadow}

This proof uses the shadow inequality by Rains~\cite{
796376, 817508}:
Denote by $\varrho_S = \tr_{S^c} (\varrho)$ the marginal of $\varrho$ on subsystem $S$,
where $S^c$ denotes the complement of $S$ in $\{1,\dots,n\}$.
The inequality states that
for all states $\varrho$ and subsets $T\subseteq \{1,\dots n\}$,
it holds that~\footnote{
This trace inequality has been promoted to an {\em operator} inequality in Ref.~\cite{PhysRevA.98.052317}.}
\begin{equation}\label{eq:shadow_ineq}
S_T =  \sum_{S \subseteq \{1\dots n\}} (-1)^{|S \cap T|}\tr_S(\varrho_S^2) \geq 0\,.
\end{equation}
The non-negativity of this expression can be understood by projecting $\varrho^{\ot 2} $ locally
onto symmetric and anti-symmetric subspaces given by $\tfrac{1}{2}(\one + \operatorname{swap})$ and $\tfrac{1}{2}(\one - \operatorname{swap})$ and taking the trace~\cite{681316}.
Expanding the resulting expression using the swap trick, 
$\tr[\operatorname{swap} (A \otimes B)] = \tr(AB)$,
yields the inequality in Eq.~\eqref{eq:shadow_ineq}.
The shadow inequality can be understood as a monogamy of entanglement constraint which governs the sharing of quantum correlations~\cite{PhysRevA.98.052317}.

\begin{proof}
The marginals of AME states are maximally mixed, and by the Schmidt decomposition all of its purities are known.
For a four-qubit AME state,
\begin{align}
 \tr(\varrho_i^2) &= \tr(\varrho_{ijk}^2) =\frac{1}{2} \,,\nn\\
 \tr(\varrho_{ij}^2) &=\frac{1}{4}\,.
\end{align}
We evaluate Eq.~\eqref{eq:shadow_ineq} with $T = \{A,B,C,D\}$ and get
\begin{equation}
 S_{ABCD} = 1
 - (4\cdot\tfrac{1}{2})
 + (6\cdot \tfrac{1}{4} )
 - (4\cdot \tfrac{1}{4})
 + 1
 = -\tfrac{1}{2}\,.
\end{equation}
However, the shadow inequality requires that $S_{ABCD}$ be non-negative, leading to a contradiction.
As a consequence, a four-qubit AME state does not exist.
\end{proof}

\section*{Proof 6: Gour-Wallach}
\label{sec:GW}
Independently of Rains, Gour and Wallach~\cite{GourWallach2010} presented a proof that also relies on an inequality involving invariants, however, without explicit reference to Rains'
shadow inequalities.

\begin{proof}

Define the linear entropy
$      \tau_{X|\bar{X}}\ =\ 2\left(1-\tr (\rho_X^2) \right)
$,
and let
\begin{align}
    \tau_1 &=
\frac{1}{4}\big(\tau_{A|BCD} + \tau_{B|ACD} + \tau_{C|ACD} + \tau_{D|ABC}\big)\,, \nn\\
    \tau_2 &=
\frac{1}{3}\big(\tau_{AB|BC} + \tau_{AC|BD} + \tau_{AD|BC}\big)\, .
\end{align}
Gour and Wallach showed that for a four-qubit state~\cite{10.1063/1.2435088},
\begin{equation}\label{eq:GW}
    4 \tau_1 - 3\tau_2 = \tau_{ABCD} \geq 0\,,
\end{equation}
where $\tau_{ABCD} = |\!\bra{\psi} \sigma_y^{\ot 4} \ket{\psi^*}\!|^2$ is non-negative due to it being an absolute value squared.
However, a state with maximally mixed two-qubit marginals must have
$\tau_1 = 1$ and $\tau_2 = 3/2$, which is in contradiction to inequality~\eqref{eq:GW}.
\end{proof}
\begin{remark}
Note that for qubits,
$\tau_{ABCD}$ is nothing else than $S_{ABCD}$ from Eq.~\eqref{eq:shadow_ineq}.
\end{remark}

\section*{Proof 7: Lov\'asz bound}
\label{sec:lovasz}

This proof follows the strategy of Ref.~\cite[Corollary 9]{munne2024sdpboundsquantumcodes},
which provides a complete semidefinite programming hierarchy for the existence of qubit quantum codes with given parameters $(\!(n,K,d)\!)$.
Here, $n$ denotes the number of qubits the code contains, $K$ is the dimension of the code space, and $d$ the distance.
Here, we only need a relaxation of this hierarchy, which is given by the Lov\'asz number $\vartheta$.

Let $G$ be a graph of $N$ vertices,
where we write $i \sim j$ if $\{i,j\}$ is an edge in the graph.
The Lov\'asz  number $\vartheta$ is given by the following semidefinite program~\cite{GALLI2017159}:
\begin{align}\label{eq:lovasz}
	\vartheta(G) \quad=\quad \text{maximize} \quad & \sum^N_{i=1} M_{ii}  \nn \\
	\text{subject to} \quad & M_{ii} = a_i \quad \forall i \in V\,,   \nn\\
	&M_{ij} = 0  \quad \,\,\text{if} \quad i \sim j\,, \nn \\
	&\Delta = \begin{pmatrix} 1 & a^T\\ a & M
\end{pmatrix} \succeq 0 \,.
\end{align}
Here both $a \in \R^N$ and $M \in \R^{N\times N}$ are real.
It is known that the Lovász number~$\vartheta$ bounds the independence number $\alpha$ of a graph,
that is the maximum size of a set in which no two vertices are connected,
as $\a(G)\leq \vartheta(G)$.

\begin{proof}
Our aim is to construct a anti-commutativity graph $G$ that captures the correlations of the AME state,
and to show that its theta number is too small for it to correspond to a pure state.
Denote by $\sigma_x,\sigma_y,\sigma_z$ the three Pauli matrices,
and by $\sigma_0$ the identity matrix.
Let $\PP_n$ be the $n$-qubit Pauli basis, that is the set of the $4^n$ elements of the form
$
E_\a = \sigma_{\a_1} \ot \dots \ot \sigma_{\a_n}
$,
where $\sigma_{\a_i} \in \{\sigma_0, \sigma_x, \sigma_y, \sigma_z\}$.
With some abuse of notation, map the multi-index $\a$ to an integer between $0$ and $4^n-1$, so that $E_0 = \one$.

Suppose there exists a four-qubit AME state $\varrho$.
Let $\langle E_\a \rangle = \tr( E_\a \varrho)$ be its expectation value on Pauli operator $E_\a$.
Then one can define a moment matrix with entries
\begin{equation}
    \Gamma_{\a\b} = \langle E_\a^\dag \rangle  \langle E_\b \rangle \langle E_\a^\dag  E_\b \rangle\,,
\end{equation}
where $0 \leq \a,\b\leq 4^n-1$.
It can be seen that $\Gamma$ is positive semidefinite:
for any $v\in \C^{4^n}$ one has,
\begin{align}
 v^\dag \Gamma v &=
 \sum_{\a,\b} v_\a^*
 \tr\big(
 \langle E_\a^\dag \rangle \langle E_\b \rangle E_\a^\dag  E_\b \varrho
 \big)
 v_\b  \nn\\
 &=
 \tr\Big[ \big( \sum_{\a} v_\a \langle E_\a \rangle E_\a \big)^\dag
 \big( \sum_{\b} v_\b   \langle E_\b \rangle E_\b \big) \varrho \Big]\geq 0\,,
\end{align}
since it is the product between a hermitian square and $\varrho$, both of which are positive semidefinite operators.

Observe that
$\Gamma_{00} = \tr(\langle \one \rangle \one \varrho) = 1$ and
\begin{align}
 \Gamma_{\a0}
 &= \tr\big( \langle E_\a \rangle \langle \one \rangle  E_\a \one \varrho\big) \nn\\
 &= \tr\big( \langle E_\a \rangle \langle E_\a  \rangle E_\a E_\a\varrho\big) = \Gamma_{\a\a}\,,
\end{align}
due to the fact that the Pauli operators are unitary and hermitian, so that $E_\a^2 = \one$~\footnote{
Note that all Pauli operators are hermitian so we can safely drop any daggers.}.
Therefore the matrix~$\Gamma$ satisfies:
$\Gamma_{00} = 1$ and $\Gamma_{\a0} = \Gamma_{\a\a}$ for all $\a$.
Furthermore, from the normalization
$\tr(\varrho^2)=1$ it follows that
\begin{equation}\label{eq:Gamma_sum}
\sum_{\a=0}^{4^4-1} \Gamma_{\a\a} = \sum_{\a=0}^{4^4-1}  \langle E_\a \rangle^2 = 2^4\,,
\end{equation}
and from the maximally mixed marginals condition of AME states one has
\begin{align} \label{eq:gamma_entries1}
 \Gamma_{\a\b} &= 0 \quad\quad \text{if}
 \quad 0< \wt(E_\a E_\b) <d\,,\quad \nn\\
 \Gamma_{\a\a} & = 0 \quad \quad \text{if}
  \quad 0< \wt(E_\a) <d\,,
 \end{align}
where the weight $\text{wt}$ of a Pauli string is the number of subsystems it acts non-trivially on.
We observe that the second condition also implies that if
$0< \wt(E_\a) <d$ then also $\Gamma_{\a\b}$ for all~$\b$.
This is due to the fact that if a positive semidefinite
matrix has a diagonal entry equal to zero,
then all entries in the same row or column are zero too.
Note that one could also obtain this constraint directly from the definition of $\Gamma_{\a\b}$.

Now note that the transposed moment matrix satisfies the same properties. We can thus consider
$
    \hat \Gamma = \frac{1}{2}(\Gamma + \Gamma^T)\,,
$
for which additionally
 \begin{equation}\label{eq:gamma_entries2}
     \hat \Gamma_{\a\b} = 0 \quad\quad \text{if}\quad  E_\a E_\b + E_\b E_\a = 0\,,
 \end{equation}
 due to the commutation relations among the Pauli matrices.

Let the {\em weight} $\wt(E)$ be the number of subsystems a tensor-product operator $E$ acts non-trivially on.
In light of the conditions derived in Eqs.~\eqref{eq:gamma_entries1} and \eqref{eq:gamma_entries2},
 let us write
\begin{align}
 \a &\sim \b \quad\quad \text{if}\quad
 0<\wt(E_\a E_\b) < d \quad \text{or} \quad
 E_\a E_\b + E_\b E_\a = 0 \,, \nn\\
 \a &\sim \a \quad\quad \text{if} \quad
 0<\wt(E_\a) < d\,.
 \end{align}
Then if a four-qubit AME state exists, the following semidefinite program must,
due to Eq.~\eqref{eq:Gamma_sum},
reach a value of at least $2^4-1$:
\begin{align}\label{eq:AME_Lovasz}
\text{maximize}   \quad & \sum_{\a = 1}^{4^4-1} \hat \Gamma_{\a\a}\nn\\
\text{subject to} \quad
            & \hat \Gamma_{00} = 1 \nn\\
            &\hat \Gamma_{\a\a} = \Gamma_{\a0} \quad\quad \,\,\forall \a \in V\,,   \nn\\
            &\hat \Gamma_{\a\b} = 0 \quad\quad\quad\,\, \text{if} \quad \a \sim \b\,, \nn\\
            & \hat \Gamma \succeq 0 \,.
\end{align}
But Eq.~\eqref{eq:AME_Lovasz} is nothing else than the
Lov\'asz theta number [Eq.~\eqref{eq:lovasz}]
for the anti-commutativity graph given by the $n$-qubit Pauli basis
with the identity matrix removed, $\PP_n \backslash \{\one\}$, and additional edges among low-weight products.
Consequently, for an AME state to exist, we need to have $\vartheta(G) + 1 \geq 16$.
However, for the anti-commutativity graph of the four-qubit Pauli group $\PP_4 \backslash \{\one\}$,
$\vartheta(G) + 1 = 8$.
As a consequence, a four-qubit AME state does not exist.
\end{proof}

\begin{remark}
A similar strategy can be used to obtain SDP bounds on quantum codes: One constructs such a Lovasz bound, but then symmetrizes the resulting semidefinite program with the quaternary Terwilliger algebra~\cite{munne2024sdpboundsquantumcodes}.
\end{remark}

\section*{Discussion}
Finally, we should mention a proof that we skipped:
The proof by Grassl, Beth, and Pellizzari shows that no $(\!(3,2,3)\!)$ exists. As a consequence of, also no $(\!(4,1,3)\!)$.
This proof is rather intricate~\cite[Theorem 5]{PhysRevA.56.33}, relying on a sequence of arguments involving the Knill-Laflamme conditions, and the fact~\cite[Lemma 3]{PhysRevA.56.33} that for every two–dimensional subspace of $\C^2 \otimes \C^2$ there exists a basis that contains at least one product state. This leads to the conclusion that a $(\!(3,2,3)\!)$ code can be factored into a tensor product,
which contradicts that there does not exist a non-trivial code of length two.

In summary, we have discussed a variety of proofs for the nonexistence
of the four-qubit AME state. Our work highlights that this
nonexistence can be traced back to several well-established results
from diverse areas of quantum information theory. However, none of the
existing methods offers a simple explanation for why three completely
mixed two-qubit marginals cannot arise from a global pure four-qubit state. 
Nevertheless, we hope that our findings contribute to a deeper understanding of this issue by illuminating the interplay
between the various constraints and structural requirements.
This perspective may serve as a foundation for future work
aimed at uncovering more general principles governing multipartite
entanglement and marginal compatibility.

For future work, it will be interesting to see 
how much the semidefinite programming methods of Ref.~\cite{munne2024sdpboundsquantumcodes} will improve on the current bounds on quantum codes. 
We have used this method in shortened form in \nameref{sec:lovasz},
but the full machinery has already proven the non-existence of $(\!(8,9,3)\!)_2$ and $(\!(10,5,4)\!)_2$ codes, 
both of which are not detected by the linear programming bounds~\cite{681316, 1751-8121-51-17-175301}.
A key aspect in this work is the state polynomial optimization framework, and we except that other non-linear problems can also be approached by this type of machinery~\cite{PhysRevLett.132.200202, Huber_2024}. 
In these of applications, a key issue is typically the numerical precision of the infeasibility certificates involved. However, 
recent methods allow to extract rational infeasibility certificates, rendering these methods as good as any analytical proof~\cite{Leijenhorst_2024, PEYRL2008269}.

In the spirit of Refs.~\cite{Cadney_2014, 6134666}
Another interesting research direction would be to develop an analytical or numerical machinery to systematically find new entropy and rank inequalities.

\appendix
\section{Three-qubit GHZ states}
\label{app:GHZ_proof}
We would like to prove the following statement:

\ghzlemma*
\begin{proof}
The proof is carried out by explicit construction of the local unitaries required to transform a given state to the standard GHZ form.
Consider a three-qubit state $\ket{\psi_{ABC}}$  with three maximally mixed one-body reduced density matrices. We start with a Schmidt decomposition with respect to the bipartite cut that contains qubit $A$,
\begin{align}
    \ket{\psi_{ABC}}\ =\ \frac{1}{\sqrt{2}}
    \big( \ket{0}\ket{u_0}+\ket{1}\ket{u_1} \big)\ ,
\end{align}
with two orthogonal two-qubit states $\ket{u_0}_{BC}$, $\ket{u_1}_{BC}$. We can do another Schmidt decomposition
of $\ket{u_0}$,
which then reads,
\begin{align}
    \ket{u_0}\ =\ \frac{1}{\sqrt{2}}\left( \sqrt{\lambda}\ket{00}+
                      \sqrt{1-\lambda}\ket{11}\right)
                      \ \ \ ,\ \ \lambda\in \mathbb{R}\ .
\end{align}
For simpler notation we have used the local Schmidt bases of qubits
$A$, $B$ and $C$ as the new local computational bases.
Now, taking into account orthogonality of the two-qubit Schmidt vectors $\ket{u_0}$
and $\ket{u_1}$ we can write the latter in the local Schmidt
bases of $\ket{u_0}$,
\begin{align}
    \ket{u_1}\ =\ \frac{1}{\sqrt{1+|a|^2+|b|^2}}
                   \left( -\sqrt{1-\lambda}\ket{00}
                   +a\ket{01}+b\ket{10}+
                   \sqrt{\lambda}\ket{11}
                   \right)\ ,
\end{align}
where $a$ and $b$ are complex numbers.
This state can be substituted in $\ket{\psi_{ABC}}$.
If we now compute the reduced states $\rho_B$ and $\rho_C$ we find that the off-diagonal elements of $\rho_B$  - up to complex conjugation -
are proportional to
\begin{equation}\label{eq:offdiag1}
      -\sqrt{\lambda}\ a^*\ +\ b\ \sqrt{1-\lambda}
      \ ,
\end{equation}
whereas those of $\rho_C$ are proportional to
\begin{equation}\label{eq:offdiag2}
      -a\ \sqrt{1-\lambda} \ +\ b^*\ \sqrt{\lambda}
      \ .
\end{equation}
As the one-body reduced density matrices must be maximally mixed, this implies that
\begin{align}\label{eq:ratio}
    \frac{\sqrt{\lambda}}{\sqrt{1-\lambda}} \ =\
    \frac{b}{a^*} \ =\ \frac{a}{b^*}\ \ \ \ \ \ \text{for}\ \ a,b\neq 0\ .
\end{align}
To continue the proof, we must distinguish between two cases, (i) $
\lambda\neq\frac{1}{2}$ and (ii) $\lambda=\frac{1}{2}$.
{\color{blue} (i) If} $\lambda\neq\frac{1}{2}$ and $a, b\neq 0$ vanishing off-diagonal elements of the one-body reduced density matrices in Eq.~\eqref{eq:offdiag1} and \eqref{eq:offdiag2} is not possible. Therefore it must hold that
$a=b=0$.
Hence for $\lambda\neq\frac{1}{2}$
we obtain for $\psi_{ABC}$,
\begin{align}
    \ket{\psi_{ABC}}\ =\ \frac{1}{\sqrt{2}}
    \left(\sqrt{\lambda}\ket{000}+\sqrt{1-\lambda}\ket{011}-\sqrt{1-\lambda}\ket{100}+\sqrt{\lambda}\ket{111}\right)\ .
\label{eq:almostGHZ}
\end{align}
From here the application of
\begin{align}
     \left[ \begin{array}{cc}
                \sqrt{\lambda} & -\sqrt{1-\lambda}
        \\  \sqrt{1-\lambda} & \sqrt{\lambda}
                    \end{array}\right]_A\ \otimes
                   \ \one_B\ \otimes\ \one_C
\label{eq:LU3}
\end{align}
leads us to the standard GHZ state.

(ii) If instead $\lambda=\frac{1}{2}$ we have, according to Eq.~\eqref{eq:ratio}, $b=a^*$ so that 
$\ket{u_1}$ becomes
\[
\ket{u_1}\ =\ \frac{1}{\sqrt{1+2|a|^2}}
                   \left( -\frac{1}{\sqrt{2}}\ket{00}
                   +a\ket{01}+a^*\ket{10}+
                   \frac{1}{\sqrt{2}}\ket{11}
                   \right)\ .
\]
We reparametrize this state more conveniently as
\begin{align}
    \ket{u_1}\ =\ \frac{1}{\sqrt{2}}
    \left( -\sin{2\beta}\ket{00} +
    \text{e}^{i\theta}        \cos{2\beta}\ket{01} +
    \text{e}^{-i\theta}
            \cos{2\beta}\ket{10} +
            \sin{2\beta}\ket{11}
    \right)\ ,
\end{align}
so that the three-qubit state reads
\begin{align}
    \ket{\psi_{ABC}}_{\lambda=\frac{1}{2}}
    \ =\ \frac{1}{2}
    \left(\ket{000}+\ket{011}-\right.
    & \sin{2\beta}\ket{100}+
    \nonumber\\
    & \left.
        + \text{e}^{i\theta} \cos{2\beta}\ket{101}+\text{e}^{-i\theta}\cos{2\beta}\ket{110}+\sin{2\beta}\ket{111}\right)\ .
\label{eq:almostGHZ2}
\end{align}
Local unitary equivalence
 with the standard GHZ state is then confirmed
 by applying
\begin{align}
    H^{\otimes 3}\ \cdot\ \left(\one_A\ \otimes
     \left[ \begin{array}{cc}
        \text{e}^{-i\theta}  \cos{\beta} & \sin{\beta}
 \\   - \text{e}^{-i\theta}\sin{\beta} & \cos{\beta}
                    \end{array}\right]_B\
                    \otimes
                   \
    \left[ \begin{array}{cc}
        \text{e}^{i\theta}  \cos{\beta} & \sin{\beta}
 \\   - \text{e}^{i\theta}\sin{\beta} & \cos{\beta}
                    \end{array}\right]_C
                    \right)
\label{eq:LU3c}
\end{align}
to $\ket{\psi_{ABC}}_{\lambda=\frac{1}{2}}$, where
$H$ denotes the standard single-qubit Hadamard operation
\[
    H\ =\ \frac{1}{\sqrt{2}}
            \left[ \begin{array}{cc}
        1 & 1
 \\     1 & -1
                    \end{array}\right] \ .
\]
This ends the proof.
\end{proof}

\bibliographystyle{alpha}
\bibliography{current_bib}
\end{document}